\pgfplotsset{compat=1.18}
\newtheorem{definition}{Definition} % Definitions numbered independently
\theoremstyle{plain}
\newcommand{\cmark}{\ding{51}}  % ✓
\newcommand{\xmark}{\ding{55}}  % ✗
\newtheorem{lemma}{Lemma}  
\definecolor{greencustom}{HTML}{8DB4AD}
\def\BibTeX{{\rm B\kern-.05em{\sc i\kern-.025em b}\kern-.08em
    T\kern-.1667em\lower.7ex\hbox{E}\kern-.125emX}}
\begin{document}

\title{Length-Matching Routing for Programmable Photonic Circuits Using Best-First Strategy
% {\footnotesize \textsuperscript{*}Note: Sub-titles are not captured for https://ieeexplore.ieee.org  and
% should not be used}
% \thanks{The author would like to thank Wim Bogaerts and Ferre Vanden Kerchove for their valuable ideas and contributions to the framework.}

}
\author{
    \IEEEauthorblockN{Xiaoke Wang, Dirk Stroobandt}
    \IEEEauthorblockA{\textit{Ghent University} \\ 
    Belgium\\
    \{xiaoke.wang, dirk.stroobandt\}@ugent.be}
}
% \author{\IEEEauthorblockN{Anonymous Authors}}

\maketitle
\begin{abstract}
In the realm of programmable photonic integrated circuits (PICs), precise wire length control is crucial for the performance of on-chip programmable components such as optical ring resonators, Mach-Zehnder interferometers, and optical true time-delay lines. Unlike conventional routing algorithms that prioritize shortest-path solutions, these photonic components require exact-length routing to maintain the desired optical properties.

To address these challenges, this paper presents different length-matching routing strategies to find exact-length paths while balancing search space and runtime efficiently. We propose a novel admissible heuristic estimator and a pruning method, designed to enhance the accuracy and efficiency of the search process. The algorithms are derived from the Best-First search with modified evaluation functions. For two-pin length-matching routing, we formally prove that the proposed algorithms are complete under monotonic heuristics. For multi-pin length-matching challenges, we introduce a pin-ordering mechanism based on detour margins to reduce the likelihood of prematurely blocking feasible routes. Through evaluations on various length-matching benchmarks, we analyze runtime and heuristic performance, demonstrating the effectiveness of the proposed approaches across different layout scenarios.
\end{abstract}

% \begin{abstract}
% In the realm of programmable photonic integrated circuits (PICs), precise wire length control is crucial for the performance of on-chip programmable components such as optical ring resonators, Mach-Zehnder interferometers, and optical true time delay lines. Unlike conventional routing algorithms that prioritize shortest-path solutions, these photonic components require exact-length routing to maintain desired optical properties. 

% To address these challenges, this paper presents different length-matching routing strategies, to find exact-length paths while balancing search space and runtime efficiently. We propose a novel admissible heuristic estimator and a pruning method, designed to enhance the accuracy and efficiency of the search process. The algorithms are derived from the Best-First search with modified evaluation functions, ensuring the algorithm is complete in hexagonal programmable PIC architectures. Through evaluations of various length-matching benchmarks, we analyze runtime and heuristic performance, demonstrating their effectiveness across different layout benchmarks.
% \end{abstract}

\begin{IEEEkeywords}
Programmable Photonics, Routing, Length-Matching
\end{IEEEkeywords}

\section{Introduction}
\label{sec:intro}
% \begin{figure}[ht]
%     \centering
%     \includegraphics[width=\linewidth]{img/Mesh_MODEL.pdf}
%     \caption{Hexagonal programmable photonic circuit with a ring resonator configured in the middle (yellow) (a): modelled as a directed graph where (b) a replicated unit is shown and (c) three states interpretation of the $2\times 2$ tunable coupler.}
%     \label{fig:model}
% \end{figure}

\begin{figure*}
    \centering
    \includegraphics[width=.9\linewidth]{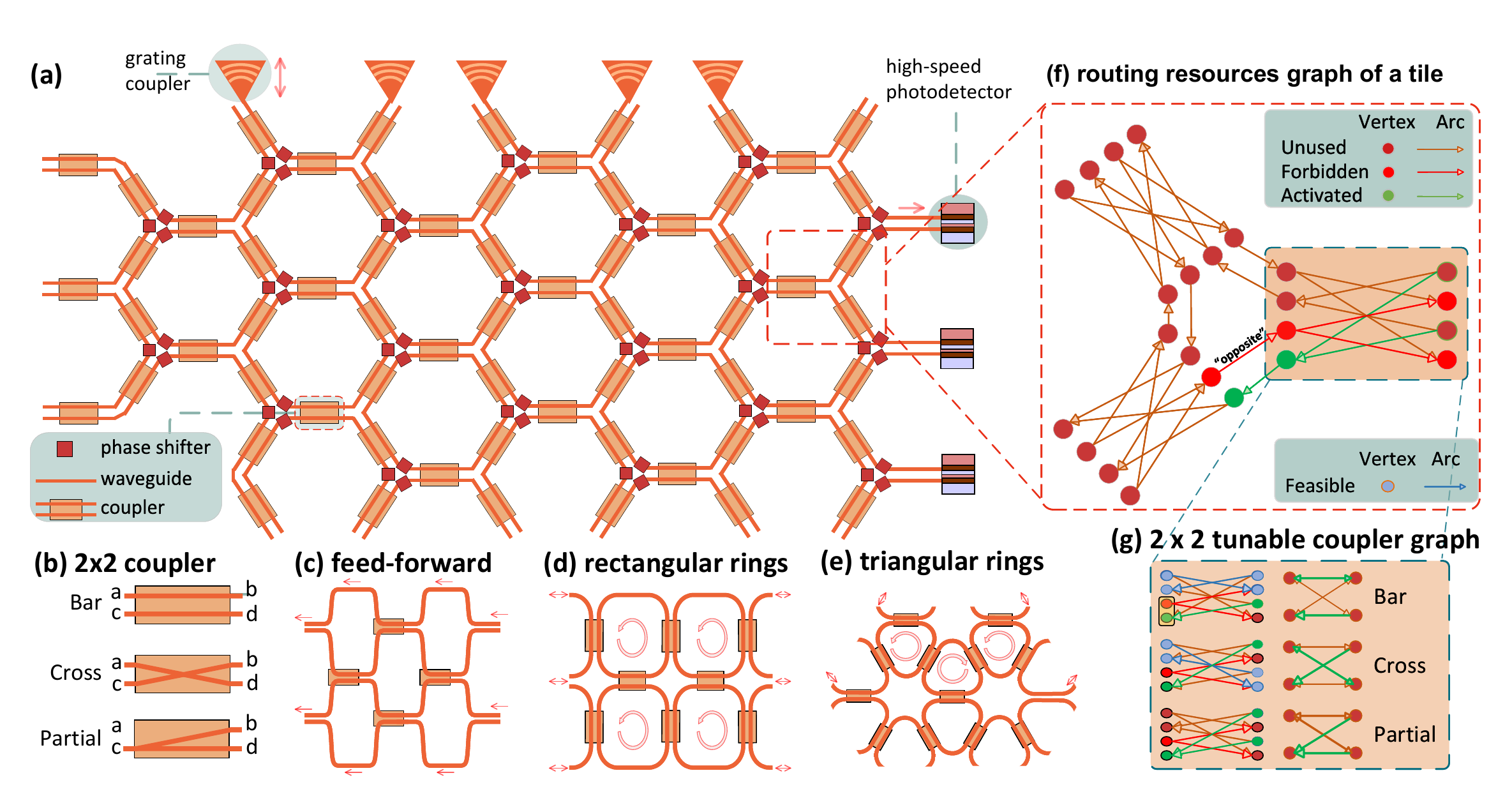}
    \caption{\small Hexagonal programmable photonic circuit with integrated phase shifters and tunable $2\times2$ couplers (a), showing coupler states (b), three other different architectures of programmable photonic circuits (c–e), and directed routing resources graph of tile and coupler (f, g).}
    \label{fig:model}
\end{figure*}
Silicon photonics and Photonic Integrated Circuits (PICs) have seen significant advancements in recent decades, offering compact, high-speed, and energy-efficient solutions for optical signal processing. Among these developments, programmable photonics has emerged as a promising approach~\cite{bogaerts_programmable_2020,perez2020multipurpose}, enabling reconfigurable circuits composed of tunable elements and functional building blocks. Unlike traditional PICs, where connectivity is fixed, programmable architectures allow dynamic configuration, introducing new challenges in circuit routing and resource allocation. Various routing algorithms have been proposed to address these challenges in programmable photonics, such as sequential routing~\cite{lopez_auto-routing_2020}, Aurora~\cite{kerchove_automated_2023}, and CRoute~\cite{wang2024croute}. However, existing solutions mainly focus on shortest-length routing while satisfying photonic constraints, with an emphasis on scalability and efficient resource usage.

In addition to resource utilization, another key aspect of routing in photonics circuits is the management of optical waveguide lengths. Unlike electrical interconnects, where wire length primarily affects resistance and capacitance, optical paths must be carefully controlled to maintain phase coherence and timing synchronization. The degree of length sensitivity varies depending on the application. In some cases, such as simple interconnects, precise length control is not critical. In contrast, for components like Mach-Zehnder Interferometers (MZIs), waveguide length directly affects phase delay and must be strictly managed. This distinction is particularly crucial in programmable PICs, where waveguide paths must be dynamically adjusted to achieve specific optical behaviour.

Length-matching techniques have long been studied in the printed circuit board (PCB) and analogue circuit design \cite{wong_pcb_2008, tan_yan_bsg-route_2008}, where ensuring signal arrival synchronization is essential to prevent timing mismatches. Similar constraints arise in photonics. Unlike electronic circuits, photonic circuits are susceptible to phase errors caused by path length mismatches. While phase shifters offer limited compensation—typically up to a single wavelength—larger mismatches from mesh-based routing cannot be corrected post-fabrication. Thus, precise length matching during routing is essential to ensure phase coherence and maintain the performance of programmable PICs. In traditional photonic circuits, physical design is mostly schematic-driven \cite{KORTHORST2023335}, and routing is often performed manually or optimized for specific layouts, and also global route planning \cite{ding2009oil,ding2009router}. However, in programmable PICs, the challenge is different: the interconnect structure is based on a regular tiling architecture, meaning that paths must be computed dynamically while satisfying strict length constraints. This introduces additional complexity, as conventional shortest-path routing is insufficient: Detours must be introduced to ensure that the path length (number of waveguide segments used) exactly meets the required value.
% Existing length-matching routing algorithms in PCB design, provide some insight into the problem. But programmable PICs differ in that, optical paths must follow a fixed mesh structure, and exact-length constraints are more strict which constrains the specific amount of waveguide segments.
% Additionally, while electrical signals can tolerate small timing variations, even minor deviations in waveguide length can cause significant phase errors in optical systems.
This makes length-matching in programmable PICs a unique problem that demands specialized routing algorithms. To the best of our knowledge, the length-matching routing problem in programmable photonics has not been previously addressed. The key distinctions between our approach and existing routing algorithms in this field are summarized in Table~\ref{tab:routing_compare}.
\begin{table}[ht]
\footnotesize
\setlength{\tabcolsep}{3pt} % 
\centering
\caption{Comparison of routing algorithms for programmable photonic integrated circuits (PICs).}
\begin{tabular}{lccc}
\hline
\textbf{Router} & 
\makecell{\textbf{Length} \\ \textbf{Matching}} & 
\makecell{\textbf{Fan-out} \\ \textbf{Support}} & 
\makecell{\textbf{Global / Local} \\ \textbf{Router}} \\
\hline
Sequential Routing~\cite{lopez_auto-routing_2020} & \xmark & 2-pin only & Global \\
Aurora~\cite{kerchove_automated_2023}              & \xmark & 2-pin only & Global \\
C-Route~\cite{wang2024croute}                      & \xmark & \cmark  & Global \\
\textbf{This Work (H$\&$DS-LM)}                         & \cmark & \cmark & \textbf{Local Component} \\
\hline
\end{tabular}
\label{tab:routing_compare}
\end{table}
This work focuses on addressing this challenge by introducing an efficient search-based length-matching (LM) routing algorithm. The proposed approach extends best-first search strategies to enforce strict length constraints while exploring feasible detours within the regular tiling architecture of programmable PICs. Furthermore, a dual-stage LM variant is introduced as an extra option, allowing for initial shortest-path exploration followed by targeted detours (close to the target) to achieve precise length matching. The proposed methods provide a practical solution for length-matching in programmable photonic circuits, enabling more efficient and flexible circuit implementations.

\section{Preliminaries}

\subsection{Programmable Photonic Circuits}

We target programmable photonic integrated circuits (PICs) based on a hexagonal mesh of tunable $2 \times 2$ couplers and integrated phase shifters, as shown in Fig.~\ref{fig:model}(a). Each coupler supports three programmable states: bar, cross, and partial coupling (Fig.~\ref{fig:model}(b)), allowing flexible routing of light signals. 
Alternative architectures, such as feed-forward meshes and rectangular or triangular feedback loops, are illustrated in Fig.~\ref{fig:model}(c-e). In this paper, we limit our scope to problems based on the hexagonal architecture. For routing purposes, we construct a directed routing resource graph where each coupler and interconnect is represented by multiple vertices and arcs, as shown in Fig.~\ref{fig:model}(f-g). This abstraction captures the directionality of light propagation and enables enforcement of physical constraints such as no back-propagation and loop-free paths. Activated, forbidden, and feasible states are encoded in the graph to reflect current routing conditions.
\subsection{Problem Formulation}

We first clarify the routing requirements for programmable photonic circuits, including physical restrictions. The tunable coupler serves as a unique device within the routing process. Therefore, a customized model needs to be developed, and specific rules must be established to adhere to the physical restrictions.

The routing resources graph needs to obey the optical physical restriction, so we summarize the rules into the following two restrictions:

\begin{enumerate}
\item Acyclic: Avoiding loops (closed paths) in the circuit to prevent undesirable interference effects, in line with the routing requirements. 
\item Avoid Opposite: Each photonic connection or edge in the graph $G$ is inherently directional, allowing light propagation in only one specified direction along the waveguide during routing, as illustrated in Fig. \ref{fig:model}(f-g). A pair of two directional arcs represents each optical waveguide segment. Once an arc (shown in green) is utilized, its opposite arc (shown in red) becomes forbidden.
\end{enumerate}

\subsection{Length Constraints}
Except for the foundational restrictions introduced above, length-matching routing has an obvious extra constraint: the length. In typical photonic applications, such as optical true time delay lines (OTTDs), Mach-Zehnder interferometers (MZIs), and optical ring resonators (ORRs), the path length directly determines the device's optical functionality and phase relationship. For interference-based components such as the MZI shown in Fig.~\ref{fig:benchmark}(a), two arms must be length-matched, posing two length-matching routing problems, while other paths, such as input and output connections, are standard shortest-path problems. As illustrated in Fig.~\ref{fig:benchmark}(d), $L_1$ and $L_2$ denote the paths matched in length, and the rest are the shortest route. The length-constrained paths must also obey the previously mentioned acyclic and non-opposite restrictions.

\textbf{In programmable photonic meshes, the length constraint is especially strict:} each path consists of an integer number of identical segments, so the path length can only take discrete values, i.e., $L_i = N_i \cdot l_{\mathrm{seg}}$ for integer $N_i$. While general photonic circuits can tolerate small (sub-wavelength) path differences through phase tuning, the coarse granularity here means that even a one-segment mismatch ($|N_1 - N_2|=1$) results in a phase error many times greater than the optical wavelength, which cannot be compensated by on-chip phase shifters. Thus, exact segment-matching ($N_1 = N_2$) is a hard requirement for correct circuit operation.

\subsection{The Best-First Mechanism $BF$}
\label{sec: BF}
The best-first algorithm is a well-known search mechanism, designed to handle pathfinding tasks with improved efficiency by a numerical estimation involving a heuristic evaluation function $f(\cdot)$ \cite{Nilsson_a_star_1968, pearl_heuristics_1984}. 
% It starts by placing the initial node, or the start node, on a list named OPEN, which contains nodes that have yet to be expanded. 
The whole process is as follows:
{\small\begin{enumerate}[label={\arabic*.}]
    \item Add start node $s$ in the OPEN list (queue) which contains unexpanded nodes.
    \item Check if the OPEN list is empty. If it is, terminate the search as unsuccessful.
    \item Select and remove the node with the smallest \( f \) value from the OPEN list and move it to the CLOSED list.
    \item If this node is the goal, terminate the search successfully and reconstruct the path using pointers.
    \item Expand the current node, generate its successors, and assign each pointer back to the current node.
    \item For each successor $n'$:
    \begin{enumerate}[label={\arabic*.}]
        \item Calculate the evaluation function \( f \).
        \item If $n'$ is not in OPEN or CLOSED, add it to OPEN.
        \item If $n'$ is in OPEN or CLOSED and the new \( f \) value is lower, update its \( f \) value and move it back to OPEN if necessary.
    \end{enumerate}
    \item Return to step 2 and repeat until the goal is reached or the OPEN list is empty.
\end{enumerate}}

This iterative process ensures that the $BF^*$ algorithm systematically explores the most promising paths first, based on the heuristic and cost function defined, and adjusts its path strategy dynamically based on the results of successive node expansions. This mechanism serves as the foundation for the length-matching (LM) search strategies proposed in this work. In the following sections, we modify BF to incorporate exact path-length constraints, introducing H-LM and DS-LM, which adapt BF’s expansion strategy to meet the specific requirements of length-constrained routing.

\subsection{Heuristic Concepts for Length-Matching Routing}

To effectively guide the search, heuristic-based routing algorithms use an evaluation function \( f(\cdot) \) to prioritize node expansions. We introduce the essential definitions of heuristic functions relevant to this problem:

\begin{definition}[Admissible Heuristic]
A heuristic \( h(\cdot) \) is admissible if it never overestimates the true minimal cost \( h^*(\cdot) \) from the node \( n \) to the target:
\[
h(n) \leq h^*(n), \quad \forall n.
\]
\end{definition}

\begin{definition}[Consistent (Monotonic) Heuristic]
A heuristic \( h(\cdot) \) is consistent (monotonic) if, for each node \( n \) and its successor \( n' \), it satisfies the triangle inequality:
\begin{equation}
h(n) \leq c(n,n') + h(n'), \quad \forall (n,n')\in G.
\label{eq:consistency}
\end{equation}
where the $c(\cdot)$ is actual cost (length). Consistency implies admissibility.
\end{definition}

In the context of length-matching routing, maintaining strict heuristic monotonicity is challenging due to the necessity of node revisitation and detours. We briefly introduce the impact of these properties on the search:

\begin{itemize}
    \item \textbf{Completeness:} Ensures a path of exact length is found if it exists.
    \item \textbf{Consistency (Monotonicity):} Ensures systematic, efficient search, but can conflict with node revisitation requirements.
\end{itemize}

A detailed analysis of these properties for our heuristic estimator and LM algorithm is presented in the following sections.

\section{Length-matching Routing Algorithms}

In this section, we first introduce multiple algorithmic approaches designed for solving the length-matching (LM) routing problem, each characterized by distinct search strategies and complexity trade-offs. Then, we discuss node revisitation policies and the heuristic estimator specifically developed for hexagonal grid routing, both of which play a crucial role in enhancing search completeness and efficiency.

\subsection{Different LM Approaches}

In addition to our proposed routing strategies, we reference a previously established method, the $A^*$-based length-matching ($A^*$-LM) algorithm from the LEMAR router~\cite{yao2012lemar}. This method performs a constrained search over a tile-based routing graph to identify paths whose lengths fall within a specified tolerance range. While it employs an $A^*$-style expansion strategy using a cost-based priority queue, its wavefront progression mechanism restricts detour flexibility during early search stages. As a result, the algorithm primarily explores shortest paths toward the target and postpones any path-length adjustment until the destination is reached. This forces a greedy backtracking process to extend the path, rendering much of the earlier wavefront search ineffective for exact-length matching. Furthermore, the lack of heuristic detour guidance and the absence of a visited-node mechanism lead to excessive redundant node expansion, increasing search overhead, particularly under strict length-matching constraints.
Subsequently, we introduce a simpler greedy LM search as a baseline to illustrate similar unconstrained exploration behaviors. This sets a clear foundation for contrasting our enhanced heuristic-driven approaches. Finally, we present two advanced algorithmic strategies, each uniquely defined by its evaluation methodology and revisitation policy, specifically tailored to address the photonic LM routing challenge more effectively.

\subsubsection{Greedy LM Search (No Heuristic)}
The first approach is a straightforward Best-First greedy method. It prioritizes nodes using a simple cost function without heuristic guidance:
\begin{equation}
f(n) = L - g(n),
\label{eq:greed-lm}
\end{equation}
where $g(n)$ is the accumulated cost (length) from the source to node $n$, and $L$ is the desired length. Since the search prioritizes the largest $g(n)$ (smallest remaining length), this approach essentially behaves like a depth-first exploration. It allows node revisitation, which guarantees completeness but may lead to exponential search complexity in the worst case. In this work, we include this greedy method as a baseline for comparison.

\subsubsection{Heuristic-Guided LM (H-LM) Search}
The second approach incorporates heuristic guidance into the LM problem. The evaluation function combines a heuristic estimator $h(n)$, yielding:
\begin{equation}
f(n) = L - g(n) - h(n),
\end{equation}
where $h(n)$ is admissible and consistent, providing an underestimate of the distance from node $n$ to the target. The search space is illustrated in Fig.~\ref{fig:maze_astar_a}, depicting the path from the source to the target. The orange blocks represent the source \textit{S} and target \textit{T}, while the colored blocks stand for the searched nodes with evaluated cost $f_n$. In this case, within a Manhattan grid, the heuristic perfectly estimates the minimum possible cost, if there are no barriers or congestion. As a result, the search proceeds optimally without any node revisitation, representing the ideal performance of this algorithm. From the figure, we can clearly distinguish two steps: first, finding detours based on gradient descent cost; and second, once the detours are enough, the algorithm proceeds in a manner similar to A*. Although heuristic guidance can significantly improve computational efficiency by directing the search toward promising nodes, it may sometimes be misled by an underestimated heuristic, leading to the selection of infeasible paths. Once sufficient detours have been explored, the shortest path estimated by the heuristic may no longer exist in the actual search space. Ultimately, the algorithm may degenerate into a greedy exploration strategy. 

% heuristic-driven prioritization can prematurely prune viable paths if revisitation is restricted, thereby compromising completeness.
The pseudo-code is presented in Algorithm~\ref{alg:contrained_A*}, mainly relying on the Best-First searching mechanism. $G$ is the graph, $q$ is the priority queue that stores the OPEN nodes, and $s$ is the source node at the beginning. $u$ represents the current node being popped out, $v$ is the successor of $u$, $pt$ is the stored path, $g$ is the distance from source to current node and $e$ is the edge length. Line 11 defines the evaluation function, and line 12 implements a pruning strategy, which is further detailed in Section~\ref{subsec:prun}.

\begin{algorithm}[ht]
\caption{Heuristic-Guided LM Routing Algorithm}
\footnotesize
\label{alg:contrained_A*}
\begin{algorithmic}[1]
\Procedure{$H\text{-}LM$}{$G, s, t, L$}
    \State $q \gets [(0, s, [], 0)]$  \Comment{OPEN list}
    % \State $v \gets \emptyset$
    % \State $v\_nodes \gets []$
    \While{$q$}
        \State $(f, u, pt, g) \gets \Call{HeapPop}{q}$
        % \State $vs.add(u)$  \Comment{Keep OPEN}
        % \State $v\_nodes.append(v.copy())$
        \State $pt \gets pt + [u]$
        \If{$u = t$ \textbf{and} $g = L$}
            % \If{$g = L$}
            \State \Return $pt$
            % \EndIf
        \Else
            \ForAll{$(v, e) \in G[u]$}  \Comment{Node expansion}
                % \State $e(u,v).update$
                \State $g_v \gets g_n + e(u,v)$
                % \State $h_v \gets h(v,t)$ \Comment{Heuristic estimate}
               \State $\boldsymbol{f_v \gets L - (g_v + h(v,t))}$ \Comment{Evaluation}
                % \State $u\_cost \gets (g + e\_cost  $
                % \Statex $\hspace*{\fill} + d*\Call{Heuristic}{G, nbr, trg})$
                \If{$f_v \geq 0$} %\textbf{and} $v \notin vs$ \textbf{or} $f \leq vs[v] $}
                    % \State $p \gets L - u\_cost$
                \State $\Call{HeapPush}{q, (f_v, v, pt, g_v)}$ 
                \EndIf
            \EndFor
        \EndIf

    \EndWhile
    \State \Return $None$
    % \State \Return $None, v\_nodes$
\EndProcedure
\end{algorithmic}
\end{algorithm}

\subsubsection{Dual-Stage LM (DS-LM) Routing}
The third approach, the Dual-Stage LM routing algorithm, balances the above two strategies by separating the search into two stages:
\begin{itemize}
    \item \textbf{Stage 1: Shortest Path Search}:  (lines 3–10 in Algorithm~\ref{alg:dual_stage_LM}) Finds the shortest route from source to destination quickly using standard $A^*$ heuristic guidance. If the length of this shortest path matches the desired length $L$, the search terminates immediately.
    \item \textbf{Stage 2: Detour Search}:  (lines 12–17 in Algorithm~\ref{alg:dual_stage_LM}) If the shortest path is shorter than $L$, the algorithm switches to the LM evaluation function ($f(n)=L-g(n)-h(n)$) and systematically explores detours. Node revisitation is allowed to ensure that the algorithm exhaustively explores potential paths, thus maintaining completeness at the expense of potential exponential complexity in worst-case scenarios.
\end{itemize}

This dual-stage method postpones the detour step to the end, near the target, potentially yielding different results, as illustrated in Fig.~\ref{fig:maze_astar_b}. The darkness of nodes refers to current length $g_n$ here. If the source node has a large fan-out and thus insufficient available edges for detouring, delaying the detour to a later stage becomes an ideal strategy. But this method requires two stages and also needs to reinitialize the priority queue in between for transferring new evaluation functions. This reprocessing step introduces additional computational overhead, making DS-LM potentially more time-consuming than other approaches.
% This dual-stage method leverages heuristic efficiency in the first stage and carefully controlled revisitation in the second stage, providing a balance between completeness and search complexity.

\begin{figure}[ht]
    \centering
    \begin{subfigure}[b]{0.48\linewidth}
        \centering
        \includegraphics[width=\linewidth]{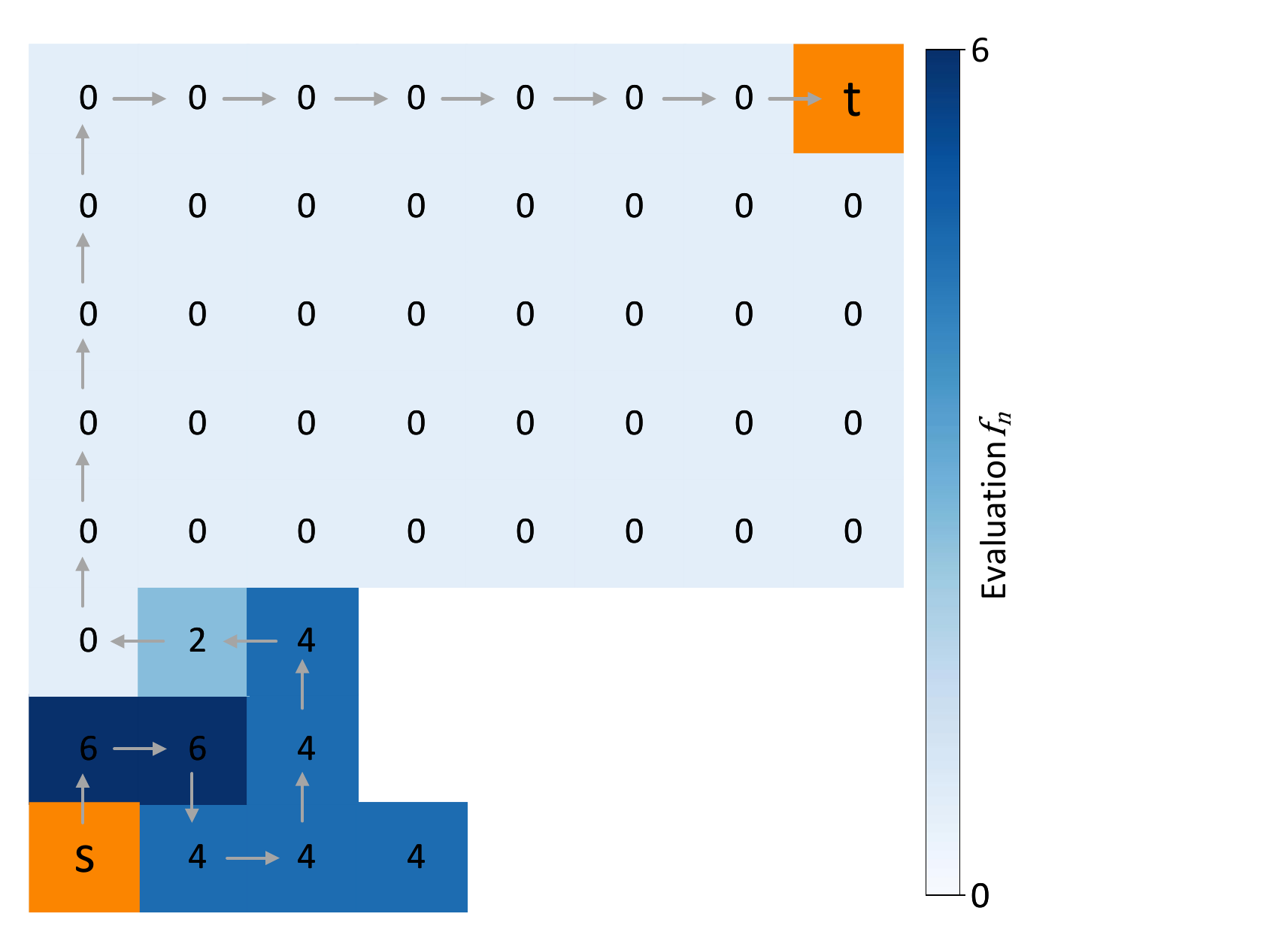}
        \caption{H-LM (detour first)}
        \label{fig:maze_astar_a}
    \end{subfigure}
    \hfill
    \begin{subfigure}[b]{0.48\linewidth}
        \centering
        \includegraphics[width=\linewidth]{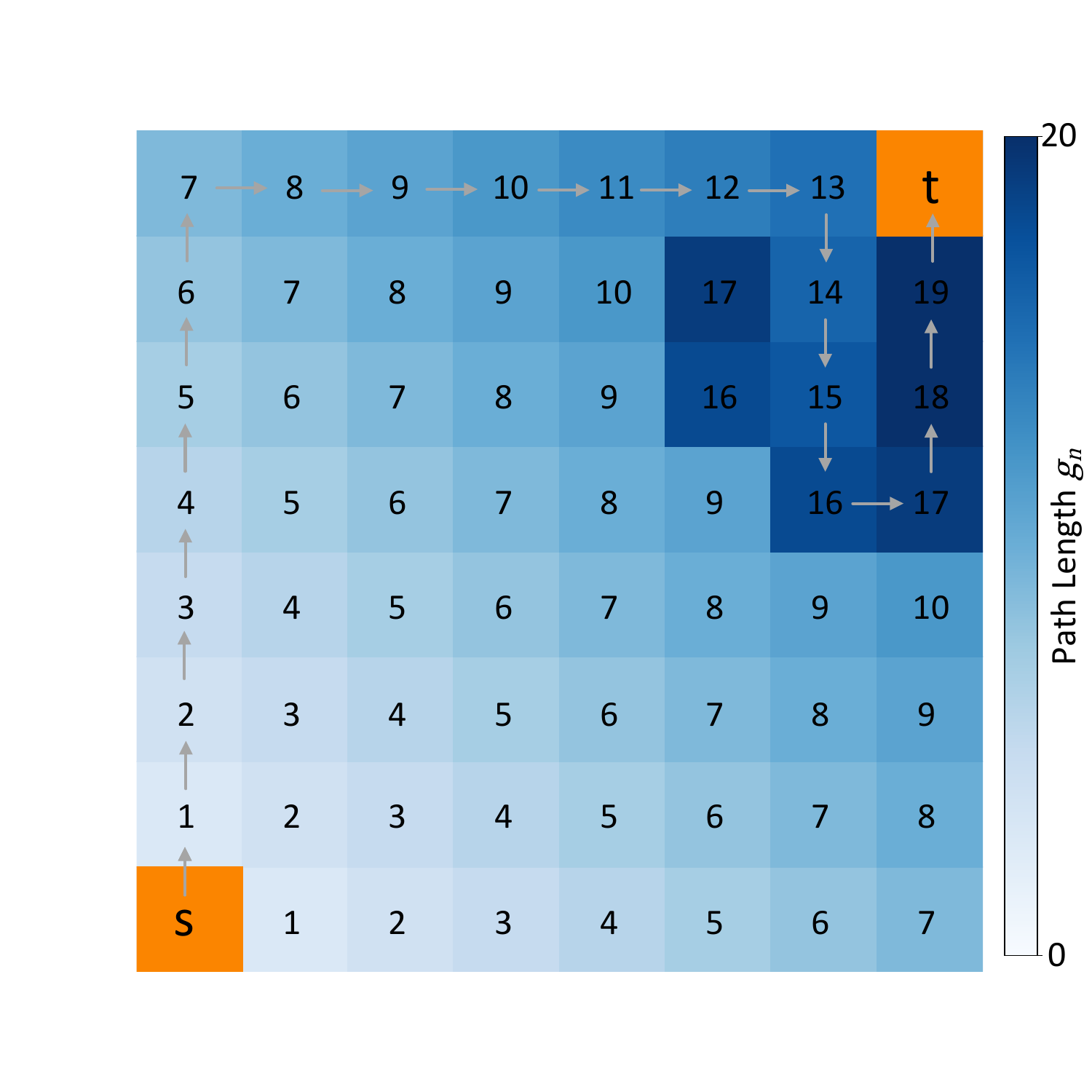}
        \caption{DS-LM (detour next)}
        \label{fig:maze_astar_b}
    \end{subfigure}
    \caption{\small An example of maze routing solved by H-LM and DS-LM with the desired exact length $L=20$ between source $s$ and target $t$ in the Manhattan grid\protect\footnotemark{}.}
\end{figure}

\begin{algorithm}[ht]
\footnotesize
\caption{DS-LM Routing}
\label{alg:dual_stage_LM}

\begin{algorithmic}[1]
\Procedure{$DS\text{-}LM$}{$G, s, t, L$}
    \State \textbf{Init:} $q \gets [(0, s, [], 0)]$, $vis \gets \emptyset$, $scost \gets \infty$
    \While{$q$}  \Comment{\textbf{\textsc{Stage 1: Shortest Path Search}}}
        \State $(g_u, u, pt, \_) \gets \Call{HeapPop}{q}$
        \If{$u \in vis$ \textbf{and} $g_u > vis[u]$} \textbf{continue} \EndIf
        \State $vis[u] \gets g_u$, $pt \gets pt + [u]$
        \If{$u = t$} 
            \State $scost \gets g_u$
            \textbf{break}
        \EndIf
        \ForAll{$(v, e) \in G[u]$} 
            \State $g_v = g_u + e(u,v)$
            \State $\Call{HeapPush}{q, (g_v + h(v,t), v, pt, g_v)}$
        \EndFor
    \EndWhile
    \If{$scost \geq L$} \Return $pt$ \EndIf
        \State \textbf{Init:} $q \gets \{(L - g_u, u, pt, g_u) \text{ for } (f, u, pt, g) \in q \}$
        \Statex \hspace{1em} \Comment{\textbf{\textsc{Switch Cost Function}}}

    \While{$q$}   \Comment{\textbf{\textsc{Stage 2: Detour Search}}}
        \State $(f, u, pt, g_u) \gets \Call{HeapPop}{q}$
        \State $pt \gets pt + [u]$
  
        \If{$u = t$ \textbf{and} $g_u = L$} \Return $pt$ \EndIf
        \ForAll{$(v, e) \in G[u]$}
            \State $g_v = g_u + e(u,v)$
            \State $\Call{HeapPush}{q, (L - g_v - h(v,t), v, pt, g_v)}$
        \EndFor
    \EndWhile
    \State \Return $None$
\EndProcedure
\end{algorithmic}
\end{algorithm}

\subsection{Node Expansion and Revisitation}
In the proposed LM routing algorithm, nodes must be allowed unrestricted revisitation to ensure completeness, due to the necessity of detours for achieving the exact path length \(L\). Later in Section~\ref{sec:formal_prop}, we prove that the evaluation function \(f(n)=L-g(n)-h(n)\) is theoretically non-increasing but lacks practical monotonicity, making revisitation essential. Therefore, restrictive node revisitation policies could prematurely eliminate viable solutions. Instead, revisitation should be permitted freely (no CLOSED restriction in BF mechanism), with pruning solely determined by an admissible heuristic evaluation criterion \(f(n)\geq 0\) introduced later in Section~\ref{subsec:prun}.

\footnotetext{In this Manhattan grid, the heuristic estimate is Manhattan distance $h(\cdot) = \delta x + \delta y = \left| x_1 - x_2 \right| + \left| y_1 - y_2 \right|$, which equals $h^*(\cdot)$ and is also admissible.}

\subsection{A Heuristic Estimator on Hexagonal Grid}
\label{estimator}
The hexagonal programmable PIC circuits are physically hexagonal, but the interconnect structure is not. The \textbf{`switching blocks'} are couplers placed at the borders between two hexagons, rather than at the corners where three hexagons meet, as shown in Fig.~\ref{fig:model}. Consequently, the corner locations (phase shifters) do not have reconfigurable interconnect capabilities. Therefore, the graph representation must be reconsidered to derive a more informed heuristic estimation (close to the optimized shortest length), improving upon the method proposed in \cite{wang2024croute}.
% We consider each tunable coupler as a node, ignoring the internal interconnects of the coupler, 

\begin{figure}[ht]
    \centering
    \includegraphics[width=.7\linewidth]{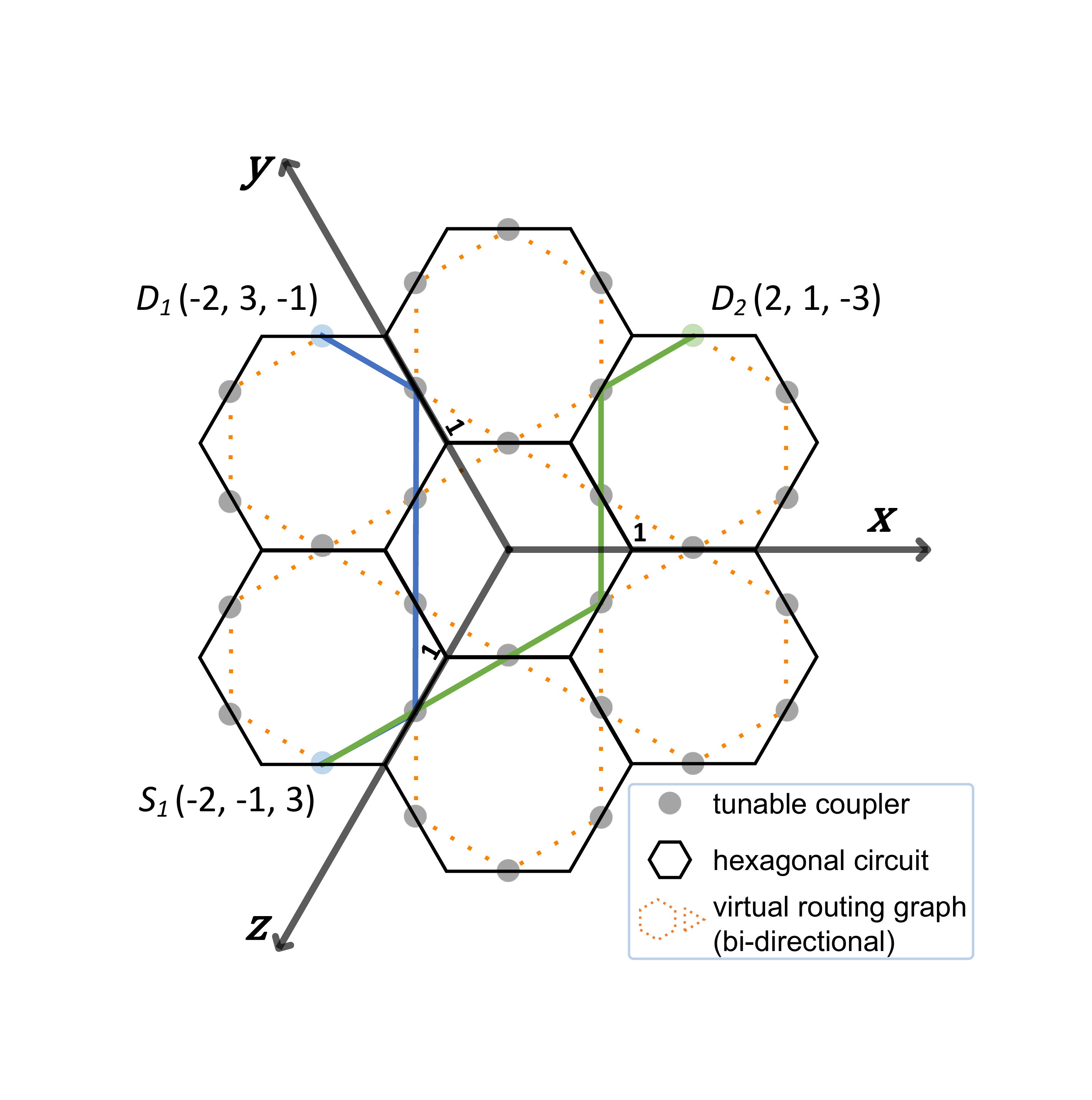}
    \caption{Look-ahead distance calculation of the heuristic estimator for a hexagonal graph.}
    
    \label{fig:estimation}
\end{figure}

As shown in Fig.~\ref{fig:estimation}, we consider only the inter-coupler waveguide length while ignoring the intra-coupler waveguide in the graph. This simplification is valid since any routed path must traverse both inter- and intra-waveguides as a pair. Thus, each coupler is represented as a node, the actual routing resources graph (RRG) is the orange mesh, tiling with hexagons and triangles. So we use the following equation to calculate the look-ahead distance for both the H-LM and DS-LM algorithms:

\begin{equation}
h(n,t) = 
    \begin{cases}
    \max\limits_{i \in \{x, y, z\}} |S_i - D_i| + 1, & \text{if } S_i = D_i \\ &\hspace*{\fill}\text{ for any } i \in \{x, y, z\}, \\
    \max\limits_{i \in \{x, y, z\}} |S_i - D_i|, & \text{otherwise}.
    \end{cases}
    \label{eq:heuristic}
\end{equation}

The distance between $S_1$ and $D_1$ is thus $3 - (-1) + 1$, as they have the same coordinate on the $x$-axis, requiring one detour (e.g. the distance between $S_1$ and $D_1$). In all other cases (e.g. when $S_1$ and $D_2$ do not share the same coordinate on any axis), the maximum absolute coordinate difference determines the shortest distance. 
As the built graph for routing is a direction graph, the estimation is based on the bidirectional graph. The distance is not always reachable for the shortest path. Also, considering occupation and congestion, this estimator is less than the optimal one $h(n) \leq h^*(n)$. Thus, this heuristic estimator is admissible (never overestimates), providing reliable guidance for the length-matching routing algorithms.

Before formally proving consistency, we state a supporting lemma:

\begin{lemma}\label{lem:triangle}
For any nodes $a,b,t$ in the hexagonal RRG, the heuristic (Eq.~\ref{eq:heuristic}) satisfies the triangle inequality:
\begin{equation}
    h(a,t) \leq c(a,b) + h(b,t).
\end{equation}
\end{lemma}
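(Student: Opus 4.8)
The plan is to reduce the triangle inequality for the hexagonal heuristic to the standard triangle inequality for the $L^\infty$-type distance underlying Eq.~\ref{eq:heuristic}, and then handle the $+1$ correction term by a short case analysis. Write $\rho(n,t) = \max_{i\in\{x,y,z\}} |S_i(n) - D_i(t)|$ for the "raw" cube-coordinate distance, so that $h(n,t) = \rho(n,t)$ or $h(n,t) = \rho(n,t)+1$ depending on whether $n$ and $t$ agree on some axis. The first step is to establish $\rho(a,t) \leq \rho(a,b) + \rho(b,t)$: this is immediate from the fact that $\rho$ is a max of absolute differences of coordinates, each of which obeys the ordinary triangle inequality on $\mathbb{Z}$, and the max of sums is at most the sum of maxes. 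The second step is to relate $\rho(a,b)$ to the edge cost $c(a,b)$: for an edge between adjacent couplers in the RRG, exactly one cube coordinate changes (by one), so $\rho(a,b) \leq c(a,b)$; in fact $\rho(a,b) \le 1 \le c(a,b)$ for a single graph edge, which is all we need.

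The third and main step is the $+1$ bookkeeping. We must show $h(a,t) \leq c(a,b) + h(b,t)$ in all four combinations of "does $a$ share an axis with $t$" and "does $b$ share an axis with $t$". The only dangerous case is when $h(a,t) = \rho(a,t)+1$ but $h(b,t) = \rho(b,t)$, i.e.\ $a$ shares an axis with $t$ but $b$ does not; here we need $\rho(a,t) + 1 \leq c(a,b) + \rho(b,t)$. Since $b$ is a neighbor of $a$, its coordinates differ from $a$'s in exactly one axis by one, so if $a$ agrees with $t$ on axis $i$ then $b$ can disagree with $t$ on axis $i$ by at most $1$; combined with $\rho(a,t) \le \rho(a,b)+\rho(b,t) \le 1 + \rho(b,t)$ from Step 1, we get $\rho(a,t)+1 \le \rho(b,t) + 2$. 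This is not yet $\le c(a,b) + \rho(b,t)$ unless $c(a,b) \ge 2$, so the argument needs to be sharpened: I expect one actually shows that when $a$ shares axis $i$ with $t$ and the neighbor $b$ moves along that same axis $i$, then $b$ is *closer* to $t$ in a way that compensates, and when $b$ moves along a different axis, $b$ still shares axis $i$ with $t$ (contradicting the case assumption). So the real content is: if $a$ shares an axis with $t$ and $b$ does not, the edge $a\to b$ must have changed the shared coordinate, which decreases $\rho$ along that axis while $\rho(a,t)$ was already governed by a *different* axis — making the bound slack. The obstacle is getting this geometric fact about which axis dominates $\rho$ stated cleanly for the hex-plus-triangle tiling, since the cube coordinates are not independent (they satisfy $x+y+z=0$ or a similar constraint), so "changing one coordinate by one" is really "changing two coordinates."

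The fourth step is to assemble: in the three benign cases ($h(a,t)=\rho(a,t)$, or both $a,b$ share an axis with $t$, or neither does) the inequality follows directly from Steps 1–2 with the $+1$ terms matching up or absent; in the dangerous case the Step-3 geometric observation closes the gap. I would then remark that consistency of $h$ (Eq.~\ref{eq:consistency}) follows by instantiating $b$ as an arbitrary successor $n'$ of $n$, which is exactly the lemma statement, and that admissibility — already argued in the text — is a consequence. The main obstacle, to restate, is the coordinate-constraint subtlety in Step 3: making precise how a single RRG edge acts on the redundant $(x,y,z)$ coordinates and why that forces the "shared-axis" status to behave monotonically enough for the $+1$ term to be absorbed into $c(a,b)$.
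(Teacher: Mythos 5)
Your decomposition of $h$ into the raw cube-coordinate distance $\rho(n,t)=\max_i|S_i-D_i|$ plus a conditional $+1$ is the right way to read Eq.~\ref{eq:heuristic}, and Steps 1--2 (the $L^\infty$-type triangle inequality for $\rho$, and $\rho(a,b)\le c(a,b)$, which is the direction actually needed) are correct. The problem is Step 3: in the one dangerous case --- $a$ shares an axis with $t$ but its neighbor $b$ does not --- you only derive $\rho(a,t)+1\le\rho(b,t)+2$, note that this is insufficient when $c(a,b)=1$, and then describe what you ``expect'' the repair to be without carrying it out. That is a genuine gap, and it is the entire content of the lemma beyond the trivial part. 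The case can be closed, and along the lines you gesture at: an RRG edge leaves one of the three (dependent) coordinates fixed and moves the other two by $\mp1$; if the fixed coordinate is the one $a$ shares with $t$ then $b$ still shares it and the case is vacuous, while if the shared coordinate is one of the two that move, the coordinate-sum constraint forces the two nonzero components of $a-t$ to both have magnitude $\rho(a,t)$, from which one checks $\rho(b,t)\ge\rho(a,t)$ and hence $\rho(a,t)+1\le c(a,b)+h(b,t)$. Until that computation is actually done for the specific hex-plus-triangle coordinate convention, your proposal identifies the difficulty but does not constitute a proof.

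For comparison, the paper's own proof (Eq.~\ref{eq:heurstic_admissible}) does not engage with this difficulty at all: it writes $h(\cdot,t)=\max_i|\cdot_i-t_i|$ throughout, i.e., it proves the inequality only for the ``otherwise'' branch of Eq.~\ref{eq:heuristic} and silently drops the $+1$ correction on both sides; it also asserts $\max_i|a_i-b_i|=c(a,b)$ as an equality where only $\le$ holds (and suffices). So your more careful case split exposes a real omission in the published argument rather than duplicating it; the remaining work is to finish Step 3 instead of deferring it.
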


\begin{proof}
Consider arbitrary nodes $a$, $b$, and target $t$ with coordinates $a_i,b_i,t_i$ for $i\in\{x,y,z\}$. By definition of $h(\cdot)$ from Eq.~\ref{eq:heuristic}:
\begin{align}
h(a,t) &= \max_{i}|a_i - t_i| \nonumber\\
&\leq \max_{i}\left(|a_i - b_i| + |b_i - t_i|\right) \quad\quad(\text{triangle inequality}) \nonumber
\\
&\leq \max_{i}|a_i - b_i|+\max_{i}|b_i - t_i| \nonumber\\
&= c(a,b)+h(b,t), \label{eq:heurstic_admissible}
\end{align}
where $c(a,b)$ denotes the actual minimum distance (cost) between $a$ and $b$ on the grid. Thus, the heuristic satisfies the triangle inequality and is consistent.
\end{proof}

\section{Formal Properties of LM Algorithms}
\label{sec:formal_prop}

\subsection{Greedy-LM Behaviour}
\label{subsec:greedy-dfs}
% When heuristic guidance is absent ($h(n)=0$), the evaluation function simplifies to:
% \[
% f(n)= L - g(n).
% \]
The first Greedy-LM was proposed with the evaluation function shown in Eq.~\ref{eq:greed-lm} without information from the heuristic.

This evaluation function strictly decreases along the paths:
\[
f(n') = L - g(n') = L - [g(n) + c(n,n')] < f(n),
\]
indicating that the algorithm behaves like a depth-first search (DFS), fully exploring deeper paths (higher cost) before searching others. Thus, this approach is effectively depth-first due to its greedy exploration strategy.

\subsection{Monotonicity in Length-Matching Routing (H-LM $\&$ DS-LM)}

In length-matching (LM) routing, the desired path length $L$ is restricted to be equal to or greater than the shortest possible path length from source to destination. Hence, the algorithm inherently requires detours, permitting node revisitation. This revisitation fundamentally disrupts the monotonic property typically enjoyed by heuristic search.

Consider the evaluation function:
\[
f(n) = L - g(n) - h(n).
\]
For a node $n$ and its successor node $n'$, we have:
\begin{align}
f(n') &= L - g(n') - h(n') \nonumber\\
      &= L - [g(n) + c(n,n')] - h(n') \nonumber\\
      &\leq L - g(n) - h(n) = f(n), \quad \text{(by Eq.~\ref{eq:consistency})}
\end{align}

where we demonstrate mathematically that $f(\cdot)$ is \textbf{non-increasing} along any path under a consistent heuristic.
However, a fundamental conflict arises here. Maintaining practical monotonicity would require expanding the nodes starting from the largest \( f(n) \) first since \( f(\cdot) \) is non-increasing. However, by heuristic search principles, the algorithm should prioritize expanding nodes with the smallest \( f(n) \), as these represent the most promising partial solutions (close to the target length). If the algorithm strictly followed the theoretical monotonic order, expanding the nodes from the largest \( f(n) \) first, it would consistently expand the least promising nodes, drastically reducing search efficiency. Furthermore, since a revisitation is necessary (each node may represent distinct paths through various detours), the search space could expand exponentially, resulting in prohibitive computational complexity.\\
On the other hand, if the algorithm prioritizes nodes with the smallest (\textbf{best}) \( f(n) \), as heuristic guidance naturally suggests, it inherently violates practical monotonicity.
Such a heuristic-driven expansion strategy, while efficient, sacrifices monotonicity, potentially losing completeness if pruning conditions are not carefully chosen.

\subsection{Pruning}
\label{subsec:prun}

To ensure completeness, the LM algorithms (H-LM and DS-LM) employ a greedy revisitation strategy with the pruning criterion:
\[
f(n) = L - g(n) - h(n) \geq 0.
\]

As our heuristic estimator $h(\cdot)$ adheres to the admissibility $h(n) \leq h^*(n)$ proven by Eq.~\ref{eq:heurstic_admissible}, 
completeness is preserved under this pruning rule.

\begin{proof}
Consider a node $n$ lying on an optimal path with exact length \(L\). Due to heuristic admissibility, we have:
\[
g(n) + h(n)\leq g(n) + h^*(n) = L.
\]

Thus, the evaluation function satisfies:
\[
f(n)=L - g(n) - h(n)\geq0
\]
only when the path length exactly matches or exceeds the required length \(L\). Therefore, no path potentially leading to an exact length solution is prematurely pruned, ensuring completeness.
\end{proof}

% \subsection{Summary of Properties}
% To summarize:
% \begin{itemize}
%     \item Revisitation breaks practical monotonicity, despite theoretical non-increasing evaluation of $f(n)$.
%     \item Employing the condition $f(n)\geq0$ for pruning maintains completeness by exploring all viable detours.
%     \item Without heuristic guidance, the algorithm behaves similarly to DFS, fully exploring deeper paths first before backtracking.
% \end{itemize}

% Therefore, our LM routing approach explicitly manages revisitation and carefully balances heuristic guidance with completeness, ensuring it finds a path of exact length $L$ whenever such a path exists.

\begin{figure*}
    \centering
    \includegraphics[width=.85\linewidth]{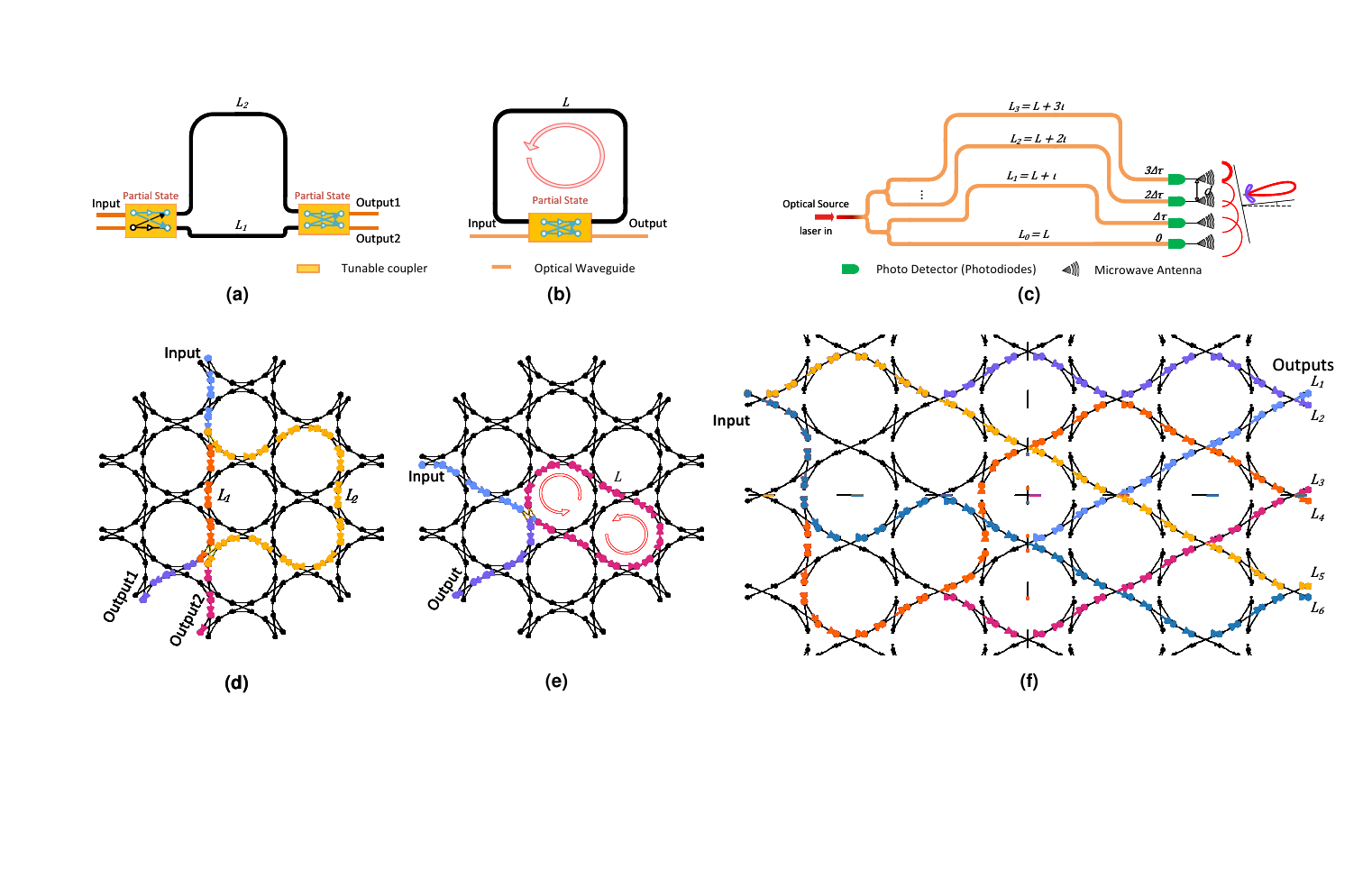}
    \caption{Benchmarks designed to reflect three representative photonic applications: (a–c) illustrate the schematic diagrams of a Mach–Zehnder Interferometer (MZI), an Optical Ring Resonator (ORR), and an Optical True Time Delay (OTTD) line for RF beamforming, respectively. (d–f) show the corresponding routing results on the hexagonal mesh using the H-LM algorithm. Colored paths represent routes that satisfy specific length constraints $L_i$.}

    \label{fig:benchmark}
\end{figure*}

\begin{figure}[h]
    \centering
    \includegraphics[width=.6\linewidth]{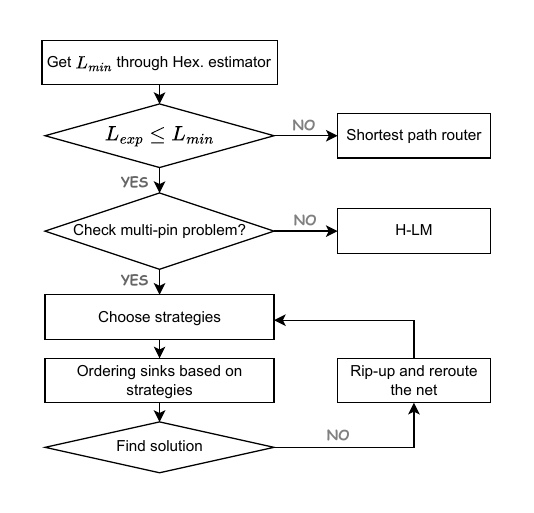}
    \caption{ Flowchart of the router with additional mechanisms for addressing multi-pin length-matching routing problems.}
    \label{fig:flowchart}
\end{figure}

\section{Additional Enhancement Mechanisms for Routing Multi-pin Nets}

It is important to note that the completeness guarantees discussed earlier are established specifically for two-pin nets. In contrast, for multi-pin nets, the routing process becomes inherently heuristic, as connections must be established sequentially, pair by pair. This sequential nature can easily lead to suboptimal local decisions, which may prevent the discovery of a globally feasible solution, thus invalidating the completeness guarantees in this broader context. To address this limitation and improve both efficiency and solution quality, we propose several enhancements to the three BF-based routing algorithms. These include evaluating the necessity of a length-matching router using the previously introduced heuristic estimator, prioritizing among different length-constrained connections, and conditionally changing the routing strategy. Since all routing strategies use length as the sole cost metric, the traditional PathFinder~\cite{mcmurchie_pathfinder_1995} mechanism, which relies on iterative dynamic cost adjustment to resolve congestion and violation, is no longer applicable. These enhancements are designed to work in concert, as illustrated in Fig.~\ref{fig:flowchart}, to improve routing robustness and solution completeness in the presence of complex multi-pin topologies.

\subsubsection{Ordering}
For multi-pin net routing problems with varying length constraints, such as the multicasting net requirements in beamforming, sink ordering plays a crucial role in identifying feasible solutions. In the H-LM algorithm, routing may introduce detours starting from the source. The source region will likely experience congestion, violation and thermal effects. Therefore, sinks with smaller detour margins (that is, minimal ${L_{\text{exp}} - L_{\text{min}}}$) should have higher priority in the order process. Conversely, sinks with larger detour margins can be deferred to later stages.
In contrast, for the DS-LM algorithm, detours tend to be introduced closer to the sink side rather than near the source. As a result, the impact of ordering is generally less critical when the sinks are well separated. However, if multiple sinks are located in close proximity, a similar issue as in H-LM may arise—early routes could block favorable paths for later ones. In such cases, prioritizing sinks with smaller detour margins remains beneficial.

\subsubsection{Rip-up and Re-route}
Although reordering the connections within a net in H-LM can improve routability, its success rate remains lower than DS-LM due to the latter's detour-first characteristic. To enhance general robustness, we adopt a two-step approach: the router first attempts H-LM; if it fails to find a feasible solution for the entire net, the routing is ripped up and retried using DS-LM. While not dynamically adaptive on a per-connection basis, this staged fallback mechanism balances efficiency and completeness.

\section{Measurements and Results Analysis}

% In this section, we first compare the two proposed length-matching (LM) routers with an $A^*$-based LM algorithm originally developed for analog and mixed-signal circuits, known as LEMAR \cite{yao2012lemar}. The comparison is conducted on maze-searching problems over a Manhattan grid.

% In the subsequent subsection, we evaluate all three routers on a hexagonal programmable photonic grid, incorporating the look-ahead length estimator described in Section~\ref{estimator}.
% \subsection{Comparison with LEMAR on Manhattan Grid Routing}

\subsection{2-pin Benchmarks}

As shown in Fig.~\ref{fig:benchmark}(a-c), three length-matching application scenarios can be solved by the LM algorithm. Fig.~\ref{fig:benchmark}(d-e) are corresponding solutions on the hexagonal directional graph $G$. Based on these three application scenarios, we generated a group of benchmarks to test the performance of the LM algorithms. For the MZI ($B_1$) benchmarks, the source and target are separated by approximately three arc distances. In the optical ring resonator (OOR) ($B_2$) benchmarks, the source and target node pairs are located within the same coupler (since, in our definition, intra-coupler waveguides are ignored). Additionally, the distance between the source and target in the optical true time delay lines (OTTD) ($B_3$) benchmarks is eight arcs. Fig.~\ref{fig:line_avg_push} shows the average number of pushed nodes under increasing length constraints on one single benchmark problem as an example. The LEMAR algorithm based on $A^*$ consistently performs the worst, due to its rigid wavefront expansion and lack of visited node tracking, which leads to inefficient detour handling and redundant exploration. Therefore, subsequent comparisons focus on the more effective Greedy-LM, H-LM, and DS-LM strategies.
The results in Table~\ref{tab:benchmark_summary} and Fig.~\ref{fig:bar_benchmark_dual} illustrate the different search behaviours of the three algorithms: Greedy-LM, H-LM, and DS-LM. The greedy algorithm explores the largest number of nodes before finding a solution, as it systematically expands all possible paths without heuristic guidance. In contrast, H-LM, leveraging heuristic estimates, searches the fewest nodes, making it the most efficient in terms of space complexity. DS-LM lies between the two, as it initially follows heuristic guidance but later transitions into a detour-based length-matching phase.

\begin{figure}[ht]
\begin{tikzpicture}
\begin{axis}[
    width=8cm,
    height=3.5cm,
    xlabel={Expected Length},
    ylabel={Avg Pushed Nodes},
    xlabel near ticks,
    ylabel near ticks,
    legend style={
        at={(0.5,1.02)},
        anchor=south,
        legend columns=4,
        draw=none,
        font=\tiny,
        /tikz/every even column/.append style={column sep=1pt}
    },
    grid=major,
    xmin=8, xmax=26,
    ymin=0, ymax=2500,
    xtick={8,10,...,26},
    ytick={0,500,...,2500},
    tick label style={font=\tiny},
    label style={font=\tiny},
    line width=1pt,
    mark size=2.5pt,
]

% LEMAR-like
\addplot+[
    color=purple!50,
    mark=square*,
    mark options={fill=purple!50, draw=purple!50}
] coordinates {
    (8,75) (10,195) (12,367) (14,631) (16,951)
    (18,1239) (20,1530) (22,1779) (24,2034) (26,2281)
};
\addlegendentry{LEMAR-like}

% Greedy-LM
\addplot+[
    color=greencustom!70,
    mark=diamond*,
    mark options={fill=greencustom!70, draw=greencustom!70}
] coordinates {
    (8,138) (10,171) (12,192) (14,120) (16,96)
    (18,135) (20,591) (22,150) (24,135) (26,417)
};
\addlegendentry{Greedy-LM}

% H-LM
\addplot+[
    color=blue!40,
    mark=triangle*,
    mark options={fill=blue!40, draw=blue!40}
] coordinates {
    (8,111) (10,36) (12,64) (14,74) (16,85)
    (18,91) (20,98) (22,109) (24,115) (26,122)
};
\addlegendentry{H-LM}

% DS-LM
\addplot+[
    color=orange!60,
    mark=*,
    mark options={fill=orange!60, draw=orange!60}
] coordinates {
    (8,91) (10,117) (12,140) (14,151) (16,156)
    (18,124) (20,410) (22,164) (24,303) (26,199)
};
\addlegendentry{DS-LM}

\end{axis}
\end{tikzpicture}
\vspace{-1mm}
\caption{Comparison of average search space (pushed nodes) across increasing expected path lengths for different routing algorithms, with a fixed source–sink distance of 8.}
\label{fig:line_avg_push}
\end{figure}
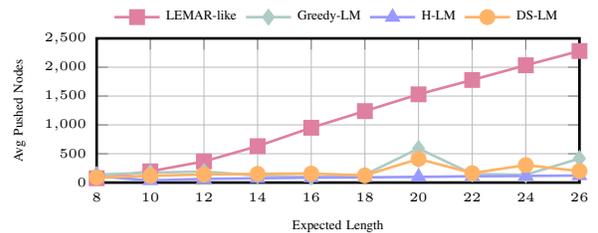

\begin{table*}[ht]
\small
\caption{ Performance Comparison of LEMAR-like, Greedy-LM, H-LM, and DS-LM Across Different Circuit Benchmarks.}
\label{tab:benchmark_summary}
\centering
\begin{tabular}{lcccccccc}
\hline
\multirow{2}{*}{\textbf{Benchmark}} 
& \multicolumn{2}{c}{\textbf{LEMAR-like\cite{yao2012lemar}}} 
& \multicolumn{2}{c}{\textbf{Greedy-LM}} 
& \multicolumn{2}{c}{\textbf{H-LM}} 
& \multicolumn{2}{c}{\textbf{DS-LM}} \\
\cline{2-3} \cline{4-5} \cline{6-7} \cline{8-9}
& Runtime (s) & Total Push 
& Runtime (s) & Total Push 
& Runtime (s) & Total Push 
& Runtime (s) & Total Push \\
\hline
$B_1$ (MZI)  & --- & ---  & 0.60  & 48356 & \textbf{0.10}  & \textbf{6859}  & 0.70  & 37405 \\
$B_2$ (ORR)  & ---     & ---   & 1.02  & 70950 & \textbf{0.78}  & 51372 & 0.96  & \textbf{49919} \\
$B_3$ (OTTD) & 0.0926     & 12086   & 0.009 & 1728  & \textbf{0.007} & \textbf{883}   & 0.016 & 1656  \\
\hline
\end{tabular}

\vspace{0.5ex}
% \hspace*{0.1\textwidth}
\begin{minipage}{0.9\textwidth}
\raggedright
\tiny \textsuperscript{*} Runtime measured on AMD Ryzen 7 7800X 12-core processor with 128~GB RAM.
\end{minipage}
\end{table*}

However, DS-LM tends to be the most time-consuming strategy in $B_1$ and $B_2$ benchmarks due to its two-stage process. Specifically, as seen in Algorithm~\ref{alg:dual_stage_LM}, line 11, once the first stage terminates, the entire queue (containing all previously searched nodes) must be reprocessed with an updated evaluation function. This transition step incurs additional computational overhead, leading to increased runtime despite a reduced search space.
The effectiveness of heuristic-based search varies significantly across different circuit layouts. In long straight-line structures such as OTTDs ($B_3$), the heuristic function is highly informative, leading to significant reductions in both search space and runtime. This explains why H-LM performs exceptionally well in $B_3$. However, in source-target proximity scenarios like ORRs ($B_2$), where frequent detours are required between closely spaced nodes, even if $h(n)$ matches the optimal shortest distance $h^*(n)$, its guidance remains limited. As a result, all three algorithms exhibit more greedy behaviour, with reduced performance differences.

\begin{figure}[ht]
\centering
\begin{tikzpicture}
\begin{groupplot}[
  group style={
    group size=2 by 1,
    horizontal sep=0.7cm,
  },
  width=.6\linewidth,
  height=3.5cm,
  ybar,
  enlarge x limits=0.2,
  symbolic x coords={$B_1$, $B_2$, $B_3$},
  xtick=data,
  x tick label style={rotate=0, font=\tiny},
  tick label style={font=\tiny},
  ylabel style={font=\tiny, yshift=-6.3pt},
  title style={font=\tiny},
  legend style={
      at={(-0.65, 1.35)},
      anchor=north,
      legend columns=4,
      /tikz/every even column/.append style={column sep=1pt},
      draw=none,
      font=\tiny
    }
]

% --- Runtime Ratio Plot ---
\nextgroupplot[
  ylabel={Runtime Ratio},
  ymin=0, ymax=2.0,
  legend style={at={(0.5,1)}, anchor=south, legend columns=3},
]

\addplot+[bar shift=-2.5pt, bar width=5pt, fill=greencustom!70, opacity=0.6] coordinates {
  ($B_1$, 1.00) ($B_2$, 1.00) ($B_3$, 1.00)
};
\addplot+[bar shift=2.5pt, bar width=5pt, fill=blue!40, opacity=0.6] coordinates {
  ($B_1$, 0.17) ($B_2$, 0.76) ($B_3$, 0.78)
};
\addplot+[bar shift=7.5pt, bar width=5pt, fill=orange!60, opacity=0.6] coordinates {
  ($B_1$, 1.17) ($B_2$, 0.94) ($B_3$, 1.78)
};

% --- Push Count Ratio Plot ---
\nextgroupplot[
  ylabel={Push Count Ratio},
  ymin=0, ymax=1.2,
]

\addplot+[bar shift=-2.5pt, bar width=5pt, fill=greencustom!70, opacity=0.6] coordinates {
  ($B_1$, 1.00) ($B_2$, 1.00) ($B_3$, 1.00)
};
\addplot+[bar shift=2.5pt, bar width=5pt, fill=blue!40, opacity=0.6] coordinates {
  ($B_1$, 0.14) ($B_2$, 0.72) ($B_3$, 0.51)
};
\addplot+[bar shift=7.5pt, bar width=5pt, fill=orange!60, opacity=0.6] coordinates {
  ($B_1$, 0.77) ($B_2$, 0.70) ($B_3$, 0.96)
};

\legend{Greedy-LM (baseline), H-LM, DS-LM}

\end{groupplot}
\end{tikzpicture}
\vspace{-2mm}
\caption{Relative performance of H-LM and DS-LM routing strategies with respect to the Greedy-LM baseline across benchmarks. Left: runtime; right: push count.}
\label{fig:bar_benchmark_dual}
\end{figure}
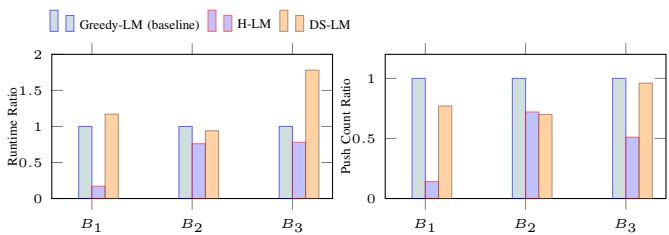

\begin{table}[ht]
\footnotesize
\renewcommand{\arraystretch}{0.9}
\setlength{\tabcolsep}{2pt}
\centering
\caption{Comparison between H-LM and DS-LM on Length-Matching $1 \times 6$ Multicasting Net.}
\label{tab:ds_lm}
\begin{tabular}{lccc ccc}
\hline
\multirow{2}{*}{\textbf{Benchmark}} & \multicolumn{3}{c}{\textbf{H-LM}} & \multicolumn{3}{c}{\textbf{DS-LM}} \\
\cmidrule(lr){2-4} \cmidrule(lr){5-7}
& Runtime (s) & Push & TWL & Runtime (s) & Push & TWL \\
\hline
$B_4 (1\times6\text{ }net)$ & 0.128 & 14229 & 529 & 0.123 & 12978 & 479 \\
\hline
\end{tabular}
\vspace{-1.0ex} 
\begin{flushleft}
\tiny \textsuperscript{*} Runtime measured on AMD Ryzen 7 7800X 12-core processor with 128~GB RAM.
\end{flushleft}
\end{table}

These findings indicate that while heuristic-based search strategies are generally more efficient, their effectiveness depends on the circuit topology. DS-LM introduces additional computational overhead but remains a valuable trade-off strategy for length-matching problems that require detours.
\subsection{Multi-pin Benchmarks}
As shown in Table~\ref{tab:ds_lm}, we compare the performance of DS-LM and H-LM in a specific application case ($B_4$): ten $1\times6$ OTTD-based multicasting beamforming nets with different combinations of target lengths as shown in Fig.~\ref{fig:benchmark}(f). Both routing approaches treat the network as a set of connection pairs. However, H-LM tends to generate excessive detours near the source, which subsequently blocks later routing targets. In contrast, DS-LM postpones detours until closer to the targets, resulting in a more efficient routing process with fewer nodes searched. However, the switch-function step of DS-LM still incurs slightly longer runtimes. Nonetheless, DS-LM achieves smaller total wire lengths (TWL), as nets tend to share paths near the sources and only diverge with distinct detours closer to the sinks. This demonstrates the advantages of DS-LM in applications where source congestion is a critical factor.

% \section{Conclusion and Future Work}

% Most photonic length-matching requirements arise within localized regions of a circuit, typically confined to the \textbf{`arms'} of photonic elements. Thus, computational complexity and runtime overhead remain manageable. The methods proposed in this paper can efficiently and effectively address these localized length-matching needs, even when employing a relatively greedy search strategy. In this work, we formally proved the completeness and heuristic admissibility of our proposed H-LM and DS-LM algorithms under admissible heuristics for two-pin length-matching routing problems. Hereby, H-LM is more efficient in most cases for finding a detour-first feasible solution, whereas DS-LM postpones the detour close to the destination which is more efficient in source-congested cases, providing more flexibility and alternative routing choices for programmable photonics.
% Furthermore, to address the multi-pin length-matching challanges, we incorporate a pin-ordering mechanism that prioritizes sinks based on their detour margin. This ordering helps avoid early commitment to difficult connections that could block feasible paths for others. 

\section{Conclusion and Future Work}

Most photonic length-matching requirements arise within localized regions of a circuit, typically confined to the \textbf{`arms'} of photonic elements. Thus, computational complexity and runtime overhead remain manageable. In this work, we formally proved the completeness and heuristic admissibility of our proposed H-LM and DS-LM algorithms for two-pin length-matching routing problems under admissible heuristics. H-LM efficiently identifies feasible solutions by prioritizing early detours, whereas DS-LM delays detour routing near the destination, effectively addressing scenarios with source-region congestion and offering increased flexibility for programmable photonics. For multi-pin length-matching challenges, we further proposed a pin-ordering mechanism based on detour margins, reducing the likelihood of prematurely blocking feasible routes.

% According to the demand from the programmable photonics circuits length-matching problem, we proposed a LM searching algorithm which is a straightforward length-constrained searching algorithm with the expected length $L$ in the evaluation function $f(\cdot)$. We proved the admissibility of the LM algorithm and proposed related theorems about it. If this algorithm is in routing, the length look-ahead is less than the optimal estimation, which contradicts the admissibility. Thus there do not exists a complete and efficient heuristic to find the optimal result. In the programmable photonics circuits, aiming to this disadvantage, we proposed an enlarged heuristic $\epsilon(\cdot)$ based on the distance look-ahead $h(\cdot)$. By tuning the consent parameter $\alpha$, this enlarged heuristic can maximally approach the admissibility with a good performance.\\
Current work is limited to solving the separate point-to-point length matching problem. In large-scale complex circuits on programmable PICs, congestion and thermal effects will happen, so combining this LM search with the adaptive congestion negotiation mechanism will be needed in future work. Due to the inherent symmetry and constraints of the hexagonal architecture, detours in length-matching routing are strictly quantized into increments of $+2$, $+4$, $+6$... following even integers. Prior work has analyzed the realizable path lengths in different architectures \cite{Gao204Provable}. Leveraging such architectural insights could enable the development of more informed heuristics and efficient pruning strategies, potentially reducing search complexity while preserving completeness. Exploring these architecture-specific optimizations remains a promising direction for future work.

%%
%% The next two lines define the bibliography style to be used, and
%% the bibliography file.

\section*{Acknowledgment}
This work was made possible by collaborations with Ferre Vanden Kerchove and Prof. Mario Pickavet from IDLab, Ghent University, and Prof. Wim Bogaerts from the Photonics Research Group, Ghent University.
% \newpage
\bibliographystyle{IEEEtran}
%\bibliography{main}
% Generated by IEEEtran.bst, version: 1.14 (2015/08/26)

\end{document}